\theoremstyle{plain}
\newtheorem{theorem}{Theorem}[section]
\newtheorem{lemma}[theorem]{Lemma}
\theoremstyle{definition}
\theoremstyle{remark}
\newcommand{\tr}{^{\prime}}
\def\b#1{\mbox{\boldmath $#1$}}    
\newcommand{\E}{{\rm E}}
\newcommand{\diag}{{\rm diag}}    
\title{A Fisher-scoring algorithm for fitting latent class models with individual covariates}
\author{A. Forcina,
Dipartimento di Economia, Finanza e Statistica,\\University of
Perugia, Italy}
\begin{document}
\maketitle
\begin{abstract}
We describe a modified Fisher scoring algorithm for fitting a wide variety of latent class models where the class weights and the conditional distributions of the responses may depend on continuous covariates through a multinomial logit model. We derive a simple expression for the score vector, which can be computed efficiently, together with the empirical information matrix, which we show to be an unbiased estimate of both the observed and the expected information matrices. The proposed Fisher scoring algorithm uses the empirical information matrix to update the step direction while the step length is determined by a line search routine which ensures that the likelihood never decreases from one step to the next. The new algorithm converges rather quickly for any choice of starting values as long as they are far away from the boundary of the parameter space. A convenient expression for the observed information matrix is also derived. An application to education transmission is used as an illustration.
\end{abstract}

\paragraph{Keywords.}
Latent class models; unobserved heterogeneity; Fisher scoring algorithm; Newton-Raphson algorithm expectation-maximization algorithm; Empirical information matrix; line search; step length; individual covariates; multinomial logit.

\section{Introduction}
Latent class models are important tools for modeling the dependence structures induced by unobservable heterogeneity with important applications in several fields of social and biomedical sciences. Since the pioneering work of \cite{Goodman}, extensions in several directions have been proposed. Particular attention has been devoted to the case where the probabilities of belonging to latent classes may depend on covariates, see, for a detailed discussion, \cite{Verm2010} and \cite{Band:2012}. Latent class models where covariates may also affect the conditional distribution of the responses conditionally on the latent have been used, for instance, by \cite{BaFo:2006} and \cite{DaLi:2012} among others. For a general discussion of identifiability and inference in latent class models where covariates may affect both the marginal distribution of the latent and the conditional distribution of the responses, see \cite{HuaBan:04}. Finally, models where the assumption of conditional independence may be violated have been considered, among others, by \cite{Hagenaars} within a log-linear context and \cite{BaFo:2006} in the context of marginal models.

The EM (expectation-maximization) algorithm is generally used to compute the maximum likelihood estimates, though, for instance, the Latent GOLD software combines EM and Newton-Raphson. The EM algorithm is known for being numerically stabile and the likelihood always increases from one step to the next. The Newton-Raphson algorithm, on the other hand, is likely to diverge, unless the starting values are very close to a local maximum. The performance of quasi-Newton algorithms may be greatly improved by performing a suitable line search to optimize the step length \cite[see for example][]{PotShi:95, Turn:2008}; the algorithm that we propose adopts simple
diagnostics to determine the behaviour of the likelihood locally and to optimize the step length.

One drawback of the EM algorithm is that convergence may be slow; this is a  problem when several covariates of interest are available and one needs to search for a suitable model among many different alternatives. In the case when covariates are assumed to affect only the latent weights, simple estimates may be computed by the following three step procedure: (i( fits an ordinary latent class model without covariates, (ii) on this basis,  assigns subjects to latent classes and (iii) models the effect  of covariates as if the latent class was known. \cite{BoCrHa2004} have shown that this lead to biased estimates and proposed a correction to the third step. Recently, \cite{Verm2010} has suggested a more general formulation of the third step which provides better correction.

In this paper we propose a Fisher scoring kind of algorithm for fitting a more general family of latent class models where individual covariates may affect both the marginal distribution of the latent and the conditional distributions of the responses. In addition, our framework allows for the response variables to be associated conditionally on the latent, as long as the model must be identifiable. By introducing a convenient matrix notation, we derive a simple expression for the score vector and the {\it Empirical} Fisher information matrix, \cite{Meil:89}, which we suggest using for computing the step direction. This matrix, which we show to be positive definite under very mild conditions, is much simpler to compute than both the observed and the expected information matrices. The expected information matrix is always positive definite whenever the model is identifiable; on the other hand, it has been argued \cite[see][]{EfrHin:1978} that the observed information matrix is preferable for the asymptotic distribution of the maximum likelihood estimates. The empirical information matrix is, somewhere, in between the two and makes the algorithm much more efficient. The use of the empirical information matrix in fitting algorithms and for estimating standard errors has been investigated by \cite{Scott:02} in a quite general context.

In section 2, after defining the model, we derive an expression for the score and the empirical information matrix and give conditions under which this matrix is positive definite. In section 3 we discuss efficient computation of these quantities and describe a line search algorithm which makes the proposed Fisher scoring algorithm efficient and very stable at the same time. An expression for computing the observed information matrix is also given in the Appendix. In section 4 we present an application from the field of education transmission.
\section{Description of the model and main results}
In the following we write $\b 1_c$ and $\b I_r$ to denote, respectively, a column vector of $c$ ones and an identity matrix of size $r$. Suppose there are $n$ subjects and $c$ latent classes with the latent variable $U$  coded as $0, 1, \cdots, c-1$. Let $\b\pi_i$, $i=1,\dots, n$, be the vector whose elements are the probabilities that the $i$th subject belongs to one of the $c$ latent classes; let $\b X_i$ be a $c\times k$ matrix of known constants, possibly depending on individual covariates, and assume that the marginal distribution of the latent is determined by a multinomial logit model with the initial category as reference,
$$
\b\pi_i=\frac{\exp(\b X_i\b\beta)}{\b 1_c\tr \exp(\b X_i\b\beta)}.
$$
When no covariates are available, $\b X_i$ will be an identity matrix of size $c$ without the first column. To make the $j$th logit ($j=1,\dots ,c-1$) depend on $x_{ik}$, simply add a column of zeros except for the $j+1$th element which is equal to $x_{ik}$.

Let $r$ be the number of possible configurations of the response variables; for the $i$th subject, their joint distribution conditional on the latent $U=j$, $j=0,\dots, c-1$, is determined by the $r\times 1$ vector of probabilities $\b q_{ij}$ whose entries are arranged in lexicographic order, meaning that response variables with a larger index run faster. We assume a multivariate logit model
$$
\b q_{ij}=\frac{\exp(\b G\b\theta_{ij})}{\b 1_r\tr \exp(\b G\b\theta_{ij})},
$$
where $\b G$ is a $r\times g$ full rank design matrix and $\b\theta_{ij}$ is a vector of log-linear parameters. If responses are conditionally independent, $\b G$ will code their main effects which correspond to the marginal logits relative to the initial category. When, like in the application presented below,  certain responses are assumed to depend on others, suitable additional columns will have to be added, like in a design matrix coding main effects and interactions.

When covariates are assumed to affect the conditional distribution of the responses, a linear model on the logits is assumed
$$
\b\theta_{ij} = \b Z_i\b\gamma_j;
$$
this formulation implies that regression parameters are specific to each latent class, but the structure of the model is the same. More parsimonious models may be formulated by imposing linear restrictions on the logits; for instance we might assume that the effect of certain covariates does not depend on the latent. For a general formulation it is convenient to stack the vectors $\b\gamma_j$ one below the other into the vector $\b\gamma$, then any restriction on the linear model above may be expressed as
$$
\b B\b\gamma=\b 0,
$$
where $\b B$ is a suitable matrix of row contrasts. For instance, to constraint two elements of $\b\gamma$ to be equal, $\b B$ will contain a row of 0's except for -1 and 1 in the entries corresponding to the two parameters assumed to be equal.

The formulation we are proposing here does not assume  that the responses are independent conditionally on the latent; the conditional dependence structure is determined by the design matrix $\b G$ which can take any form, as long as the model is identifiable. Finally, let $\b Q_i$ be the matrix whose $j$the column is $\b q_{ij}$, so that $\b p_i$ = $\b Q_i\b\pi_i$ is the marginal distribution of the responses for the $i$th subject.

Let $\b y_i$ denote a $r\times 1$ vector of  0's except for the entry corresponding to the observed response configuration for subject $i$. Under the assumption that observations from different subjects are independent, the log-likelihood may be written as $\ell(\b\beta,\b\gamma)$ = $\sum_I \b y_i\tr\log(\b p_i)$.
\subsection{The score vector}
By applying the chain rule to differentiate the log-likelihood, the score relative to $\b\beta$ may be written as
$$
\b s_{\b \beta} = \sum_i\frac{\partial \b\pi_i\tr}{\partial \b\beta} \frac{\partial \b p_i\tr}{\partial \b\pi_i} \frac{\partial \ell(\b\beta,\b\gamma)}{\partial \b p_i} = \sum_i \b X_i\tr\b\Omega_{\b \pi_i}\b Q_i\tr\diag(\b p_i)^{-1}\b y_i,
$$
where $\b\Omega_{\b \pi_i}$ = $\diag(\b\pi_i)-\b\pi_i\b\pi_i\tr$.
Noting that $\b p_i$ = $\sum_j\pi_{ij}\b q_{ij}$, the score relative to $\b\gamma_j$ is
$$
\b s_j = \sum_i \frac{\partial \b q_{ij}\tr}{\partial \b\gamma_j} \frac{\partial \b p_i\tr}{\partial \b q_{ij}} \frac{\partial \ell(\b\beta,\b\gamma)}{\partial \b p_i} = \sum_i \pi_{ij}\b Z_i\tr\b G\tr\b\Omega_{ij}\diag(\b p_i)^{-1} \b y_i,
$$
where $\b\Omega_{ij}$ = $\diag(\b q_{ij})-\b q_{ij}\b q_{ij}\tr$.
Let $\b\psi$ be the vector obtained by stacking $\b\beta$ and $\b\gamma$ one below the other and define
$$
\b D_i = \frac{\partial \b p_i}{\partial\b\psi\tr} = \begin{pmatrix}
\b Q_i\b\Omega_{\b \pi_i}\b X_i & \pi_{i1}\b\Omega_{i1}\b G\b Z_i & \dots & \pi_{ic}\b\Omega_{ic}\b G\b Z_i\end{pmatrix};
$$
then the whole score vector may be written as
\begin{equation}
\b s = \frac{\partial \ell(\b\psi)}{\partial \b\psi} =\sum_i \b D_i\tr \diag(\b p_i)^{-1}\b y_i.
\label{eq:score}
\end{equation}

When constraints across conditional distributions are present, recall that $\b B\b\gamma$ = $\b 0$ implies a linear model of the form $\b\gamma$ = $\b R\b\tau$, where $\b R$ spans the orthogonal complement to the space spanned by the columns of $\b B\tr$, that is $\b B\b R=\b 0$. In words, the constrained model may be fitted by replacing $\b\gamma$ with the vector of free parameters $\b\tau$. In practice, we simply need to left multiply the score vector by the block diagonal matrix with elements $(\b I_k,\:\b R\tr)$. In the most common cases where two or more elements of $\b\gamma$ are constrained to be equal, the matrix $\b R$ may be constructed explicitly as follow: (i) start with $\b R$ = $\b I_{ch}$, where $h$ is the dimension of $\b\gamma_j$, $j=0,\dots, c-1$, (ii) process constraints one at a time; (iii) if, say, the entries $l_1$ and $l_2$ of $\b\gamma$ must be constrained to be equal, in the actual version of $\b R$ drop column $l_2$ and replace column $l_1$ with the sum of columns $l_1$ and $l_2$.
\subsection{The information matrices}
In order to derive an expression for the observed information matrix, let $\b A,\:\b b, \:\b x$ be respectively $a\times b$, $b\times 1$ and $t\times 1$ and suppose that $\b b$ is a vector of constants, define
$$
\b\Delta(\b A,\b b,\b u) = \begin{pmatrix}\frac{\partial \b A}{\partial u_1}\b b & \dots & \frac{\partial \b A}{\partial u_t}\b b \end{pmatrix};
$$
the following result may be verified by direct expansion
\begin{equation}
\b\Delta(\b A,\b b,\b u) = \b\Delta(\b A,\b b,\b v)\frac{\partial\b v}{\partial\b u\tr}.
\label{Delta}
\end{equation}
Because $\E(\b y_i)$ = $\b p_i$ and $\b\Omega_{\b \pi_i}$, $\b\Omega_{ij}$ are both orthogonal to a unitary vector of appropriate size, it follows easily that
\begin{equation}
\E[\b D_i\tr\diag(\b p_i)^{-1}\b y_i]=\b 0
\label{rank0}
\end{equation}
Let $\b F_{ex},\:\b F_{em},\:\b F_{os}$ denote respectively the expected, empirical and observed information matrix. Let $l_i$ denote the element of $\b y_i$ which equals 1 and let $\b S$ be the $(k+ch)\times n$ matrix whose $i$th columns is the $l_i$th column of $\b D_i\tr/p_{ih_i}$; from (\ref{eq:score}), it follows that the $\b s$ may be recovered by stacking the columns of $\b S$ one below the other. Note also that
$$
\b F_{em} = \b S\b S\tr.
$$

\begin{lemma}
$\b F_{os}-\b F_{em}$ = $-\sum_i\b\Delta(\b D_i\tr,\diag(\b p_i)^{-1}\b y_i,\b\psi)$, in addition, $\E(\b F_{os}-\b F_{em})$ = $\E(\b F_{ex}-\b F_{em})$ = $\b 0$.
\end{lemma}
\begin{proof}
See the Appendix where we derive also an expression for $\b F_{os}$.
\end{proof}

\begin{lemma}
The empirical information matrix is positive definite if and only if $\b S$ is of full row rank.
\end{lemma}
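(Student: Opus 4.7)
The plan is to reduce the claim to the standard linear-algebra fact that a Gram-type matrix $\b T\tr \b W \b T$ with $\b W$ positive definite is positive definite if and only if $\b T$ has full column rank. The main work is simply rewriting $\b F$ in that form using the special structure that individual-level data imposes on $\diag(\b d_i)$.

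First, I would exploit the form $\b y_i = \b e_{u(i)}$. Since $\diag(\b y_i)$ is then a diagonal matrix with a single 1 at position $u(i)$, we get
\[
\diag(\b d_i) = \diag(\b p_i)^{-2}\diag(\b y_i) = \frac{1}{\tilde p_i^{\,2}} \, \b e_{u(i)} \b e_{u(i)}\tr,
\]
a rank-one positive semidefinite matrix. Substituting into the expression for $\b F$ from Lemma~\ref{lem:der} gives
\[
\b A_i\tr \diag(\b d_i)\b A_i = \frac{1}{\tilde p_i^{\,2}} \bigl(\b A_i\tr \b e_{u(i)}\bigr)\bigl(\b e_{u(i)}\tr \b A_i\bigr) = \frac{1}{\tilde p_i^{\,2}}\, \b t_i \b t_i\tr.
\]

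Summing over $i$ and recognizing the rows of $\b T$ as the $\b t_i\tr$, this collapses to $\b F = \b T\tr \b W \b T$, where $\b W = \diag(1/\tilde p_i^{\,2})$ is diagonal with strictly positive entries (since $\tilde p_i>0$ by virtue of $u(i)$ being an observed cell). Positive definiteness of $\b W$ is immediate.

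Finally, for any $\b v \in \mathbb{R}^{k+cg}$ one has $\b v\tr \b F \b v = (\b T \b v)\tr \b W (\b T \b v) \geq 0$, with equality iff $\b T \b v = \b 0$. Hence $\b F$ is positive definite iff the only solution of $\b T \b v = \b 0$ is $\b v = \b 0$, i.e., iff $\b T$ has full column rank $k+cg$. I do not anticipate a serious obstacle: the only subtlety is verifying that the $\tilde p_i$ are strictly positive so that $\b W$ is genuinely positive definite, which follows because $\tilde p_i$ is the probability assigned to an actually observed response pattern.
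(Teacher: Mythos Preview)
Your proof is correct and is exactly the paper's approach: the paper's own proof is the one-line observation that $\b F = \sum_i \b t_i \b t_i\tr/\tilde p_i^{\,2} = \b T\tr\diag(\tilde{\b p})^{-2}\b T$, from which the equivalence follows. You have simply spelled out the ``simple algebra'' and the Gram-matrix argument that the paper leaves implicit.
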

\begin{proof}
Follows from the fact that $\b F_{em}$ = $\b S\b S\tr$ is symmetric; a necessary condition for $\b S$ to be of full rank is that $n\geq k+ch$, implying that the number of observations must not be  smaller than the number of free parameters.
\end{proof}
\section{Description of the algorithm}
Here we propose to compute the maximum likelihood estimates by a modified Fisher scoring algorithm with $\b F_{ex}$ replaced by $\b F_{em}$, combined with a suitable line search whose features are outlined below. $\b F_{em}$ seems to be a suitable compromise between the expected and the observed information matrices; the latter in addition to being much harder to compute, it is also likely to be non positive definite unless the estimate is close to a local maximum. As we show below, $\b F_{em}$ is much easier to compute than $\b F_{ex}$. A comparison between the estimates of the standard errors provided by $\b F_{em}$, $\b F_{ex}$, and $\b F_{os}$ is presented in Section 4. A set of {\sc MatLab} functions that implements the algorithm described here will be provided as supplementary material.
\subsection{Computational aspects}
\label{Fstep}
Now we show that, having assumed that individual observations are available, computation of the score vector and of the empirical information matrix can be performed in an efficient way. The hardest step is computation of the $\b D_i$ matrices which involve, apparently, several matrix multiplications. We now show that for this task, the only matrix multiplication that is required is $\b G\b Z_i$; however, because these matrices do not depend on parameters, the product needs to be computed only once. Let $l_i$ denote the entry of $\b y_i$ which is equal to 1, let also $\tilde{\b q}_i$ and $\tilde{\b g}_j$ denote, respectively, the $l_i$th column of $\b Q_i\tr$ and of $(\b G\b Z_i)\tr$, then, the $i$th columns of $\b S$ may be computed as
$$
\begin{pmatrix}
\b X_i\tr[\diag(\b\pi_i)\tilde{\b q}_{li}-\b\pi\b\pi\tr\tilde{\b q}_{li}]\\
\pi_{i0} [\diag(\b q_{i0})\tilde{\b g}_i-\b q_{i0}\b q_{i0}\tr \tilde{g}_i]\\
\cdots  \\
\:\pi_{i,c-1} [\diag(\b q_{i,c-1})\tilde{\b g}_i-\b q_{i,c-1} \b q_{i,c-1}\tr \tilde{g}_i]\:
\end{pmatrix}.
$$
\subsection{Line search}
After $s-1$ steps, the basic updating equation takes the form
$$
\hat{\b\psi}^{(s)} = \hat{\b\psi}^{(s-1)}+a_{s-1} \left( F_{em}^{(s-1)}\right)^{-1}\b s^{(s-1)},
$$
where $a_{s-1}$ is the step length. When the log-likelihood is not concave, an algorithm with $a_s=1$ is almost certain to diverge, unless the starting value is very close to a local maximum; one possibility would be to set $a_0$ very small and let it increase slowly with $s$. In a related context, \cite{Turn:2008} suggest using the Levenberg--Marquardt algorithm which combines Newton-Raphson and steepest ascent steps; this would be less efficient in our context where the information matrix is positive definite. Our algorithm uses a proper line search where the log-likelihood is never allowed to decrease. Its main features are given below; before starting, set $a_0$ to some value possibly smaller than 1, say, 0.5;
\begin{enumerate}
\item
compute the log-likelihood, the score and the information matrix at $\hat{\b\psi}^{(h)}$, the estimate computed in the previous step, and use the updating equation with the last available step length to compute the initial $\hat{\b\psi}^{(h,a)}$ for the next step;
\item
compute the log-likelihood and the score at $\hat{\b\psi}^{(h,a)}$, find the step length that maximizes a cubic approximation to the log-likelihood along the given direction and compute the second guess $\hat{\b\psi}^{(h,b)}$;
\item
update the estimate by choosing the guess value with the highest likelihood;
\item
in case of no improvement, first shorten the step and, if even this does not work, perform a steepest ascent step.
\end{enumerate}
\subsection{Discussion}
The algorithm has been tested in several different contexts and it has always performed very well. Typically, when using a starting point selected at random and possibly not too far from 0, the algorithm takes small steps in the initial stage, but then goes very quickly to a local maximum. However, if one or more parameter estimates tend to large absolute values, the algorithm may become slower; in such cases one might shift such estimates towards the origin and restart the algorithm in the hope to find a local maximum away from the boundary of the parameter space. In any case, to increase the probability of reaching a global maximum, after convergence, a random perturbation may be applied to the estimates and the algorithm restarted a few times.

Compared with a standard general purpose EM algorithm, our new algorithm has shown to be 20 to 50 times faster. An explanation for this seems to be that, given the nature of the assumed model, each M-step of the EM algorithm is not much simpler than a step of our Fisher scoring algorithm. Actually, the latter is usually considerably faster with individual data because the response vector $\b y_i$ reduces to the index of the response configuration for the $i$th subject. Similar simplifications cannot be applied to the {\it complete data} for the same subject because the vector produced in the E-step has $c$ entries different from 0. The following experiment was also preformed in the context of the data presented in section 4: the Fisher scoring algorithm applied to the original data with 2568 subjects was compared with an EM algorithm applied to the same data aggregated into 125 non empty response configurations determined by discretizing each covariate at its four quantiles. The Fisher scoring was considerably faster even in this context.
\section{Application}
\subsection{The data}
As an illustration, we analyse a data set from the National Child Development Survey (NCDS), a UK cohort study that included everybody born in UK, March 1958, from the 3rd to the 9th. Information on family background, schooling and social achievements for the subjects in the sample were collected at different stages of their lives. In the application below we consider the joint distribution of the following response variables: performance in mathematics and reading test scores taken when the child was 10 and 16 years old, $M_{10},\:M_{16},\:R_{10},\:R_{16}$, an overall measure of non cognitive attitudes (as reported by teachers), $N$ and the academic qualification achieved $Q$ (none, O-level, A-level, university degree). All variables except the last were coded into three categories based on quantiles. As covariates, we consider the amount of education (measured by the number of years in school) for each parent, $E_f,\:E_m$; in addition we computed an overall measure of the concern $C_p$ for the child education on the side of the parents based on reports from the teachers.
For simplicity, in this application we restrict attention only to the 2568 females with no missing data for the selected variables. Because the relative size of the selected sub-sample is slightly less than $30\%$, results should be interpreted with care.
A complete description of the original data is available at \texttt{http://www.esds.ac.uk/longitudinal/access/ncds}.
\subsection{Model selection}
The issue underlying this analysis is wether the observed correlation between test scores and academic qualification can be mostly explained by latent abilities, though family pressure may also play a role. We start by assuming that $E_f,\:E_m$ (parents' education) affect the marginal distribution of the latent as in the equation below
$$
\log\frac{\pi_{ij}}{\pi_{i0}}=\beta_{j,1}+\beta_{j,2} E_{f,i} + \beta_{j,3}E_{m,i}, \: j=1, \dots c-1;
$$
in addition to the $c-1$ logit intercepts this model requires $2(c-1)$ regression coefficients. The response variables are assumed conditionally independent, except for a first order autoregressive model within the two pairs $(M_{10},\:M_{16})$ and $(R_{10},\:R_{16})$. To avoid estimating the two $3\times3$ transition matrices for Math and Read, we fitted a parsimonious model based on scores, for instance, for Math we assume that,
$$
\log\frac{P(M_{16}=t\mid U=j,M_{10}=s)}{P(M_{16}=0\mid U=j,M_{10}=s)} =
\theta_{j1}^M+\theta_{j2}^M[\delta(s,t)+0.5 \delta(\mid s-t\mid ,1)],
$$
where $j=0,\dots ,c-1$, $t=1,2$, $s=0,1,2$ and $\delta(a,b)=0$ unless $a=b$; a similar equations was used for the read scores. To account for the fact that the academic qualification achieved by the child may also be affected by pressure on the family side, we allowed the logits of $Q$ to depend on $C_p$, the concern on the side of the parents:
$$
\log\frac{P(Q=h\mid U=j,C_{pi})}{P(Q=0\mid U=j,C_{pi})} =
\theta_{ijh}^Q = \gamma_{jh1}+\gamma_{jh2} C_{pi}.
$$

Initially, the basic model described above with an increasing number of latent classes was fitted until the BIC started to increase: the best among such models had three latent classes as can be seen from Table \ref{tab:1}. Inspection of the 9 regression coefficients, three for each latent class,  measuring the effect of $C_p$ on each logit of $Q$, suggested to constraint to 0 four of these which were non significant at the 5\% level, this is model LC3a. In principle, parents' concern might, instead, affect directly the marginal distribution of the latent: children who receive more attention tend to belong to more talented latent types; this is model LC3b. Clearly model LC3a is the one which performs best among those considered here; the results below refer to this specific model.
\begin{table}
\caption{Bayes information criteria}
\centering
\begin{tabular}{lcrrrrr} \\ \hline
Model & \hspace{1mm} & LC2 & LC3 & Lc4 & LC3a & LC3b
\\ \hline
d.f. & & 39 & 60 & 81 & 56 & 53 \\
BIC & & 28022 & 27615 & 27677 & 27606 & 27644 \\
 \hline
\end{tabular}
\label{tab:1}
\end{table}
\subsection{Main results}
With family concern fixed at its average, the cumulative conditional distributions of academic qualification, Math and Read score at 16  clearly indicate that, after a suitable re-labeling of the latent types, these may be ordered from least to most talented/successful as shown in Table \ref{tab:2}. Subjects in latent type 0 perform very badly in Math and Read and, with high probability, do not reach any qualification while subject in latent type 2 have almost 1 chance out of 3 to get a university degree.
\begin{table}
\caption{Conditional cumulative  distributions of academic qualification, Math and Read score at 16}
\centering
\begin{tabular}{lrrrcrrcrr} \\ \hline
 & \multicolumn{3}{c}{$Q$} &\hspace{0.5mm}& \multicolumn{2}{c}{$M_{16}$} &\hspace{0.5mm}& \multicolumn{2}{c}{$R_{16}$}
 \\ \hline
$U$ & None & O-lev & A-lev & & 0 & 1 & & 0 & 1 \\ \hline
0 & 0.8903 & 0.9665 & 0.9705 & & 0.8082 & 0.9974 && 0.8301 & 1.0000 \\
1 & 0.3606 & 0.8637 & 0.8944 & & 0.3162 & 0.8931 && 0.2114 & 0.8073 \\
2 & 0.0773 & 0.4644 & 0.6719 & & 0.0114 & 0.1977 && 0.0106 & 0.2710
 \\ \hline
\end{tabular}
\label{tab:2}
\end{table}

All the regression coefficients displayed in Table \ref{tab:3} are positive and significant. As concerns the effect of parents' education, positive coefficients imply that, as the education of either parent increase, the probability of belonging to latent types 1 or 2 rather than 0 also increases. Note also that the education of the mother seems to be more effective than that of the father and that a higher education of either parent is stronger in moving from type 0 to type 2.
\begin{table}
\caption{Estimated regression  coefficients and standard errors}
\centering
\begin{tabular}{lrrlrr} \\ \hline
 & \multicolumn{2}{c}{$U=1/U=0$} & &\multicolumn{2}{c}{$U=2/U=0$}  \\
 & $\hat\beta$ & $se$ &  & $\hat\beta$ & $se$
 \\ \hline
$E_f$ & 0.1272 & 0.0785 & $E_f$ & 0.1939 & 0.0693 \\
$E_m$ & 0.2613 & 0.0662 & $E_m$ & 0.4849 & 0.0621
\\ \hline
 & \multicolumn{5}{c}{Parents' concern}
\\ \hline
 & $\hat\gamma$ & $se$ & & $\hat\gamma$ & $se$
\\ \hline
$\frac{Q=1}{Q=0}|U=0$ & 0.0949 & 0.0399 & $\frac{Q=3}{Q=0}|U=1$ & 0.1970 & 0.0465 \\
$\frac{Q=2}{Q=0}|U=0$ & 0.3187 & 0.1352 & $\frac{Q=3}{Q=0}|U=2$ & 0.4560 & 0.1330 \\
$\frac{Q=3}{Q=0}|U=0$ & 0.1534 & 0.0703 & & &
 \\ \hline
\end{tabular}
\label{tab:3}
\end{table}

Finally, the regression coefficients for $M_{10},\:R_{10}$ in the equation for $M_{16},\:R_{16}$ respectively, are all positive and highly significant; those in Read are also greater. This seems to suggest  that the correlation between test scores of the same kind taken at the age of 10 and 16 are strongly correlated even conditionally to the latent type.
\begin{table}
\caption{Regression coefficient for test scores at 16 relative to test score at 10}
\centering
\begin{tabular}{lrrrrrr} \hline
 & \multicolumn{2}{c}{U=0} & \multicolumn{2}{c}{U=1} & \multicolumn{2}{c}{U=2} \\
 & $\hat\theta$ & $se$ & $\hat\theta$ & $se$ & $\hat\theta$ & $se$
 \\ \hline
Math & 0.6971 & 0.2783 & 0.7103 & 0.1746 & 1.5721 & 0.2485 \\
Read & 1.4922 & 0.2847 & 1.3260 & 0.1599 & 1.6889 & 0.1964
 \\ \hline
\end{tabular}
\label{tab:4}
\end{table}

In Figure 1 below we have plotted the estimates of the standard errors computed from the observed and the empirical information matrices for the final model: if we consider the estimates provided by the observed information matrix to be the most reliable, those provided by the empirical information matrix seem to provide a very close approximation. The plot does not includes the 5 intercept parameters which are greater than 20 in absolute value because the corresponding standard errors are huge: this simply means that, when certain estimated probabilities are very close to 0 or 1, the data provide little information about how close to the boundary these estimates are going to be.
\begin{center}
\begin{figure}
\includegraphics[width=12.8cm,height=6.0cm]{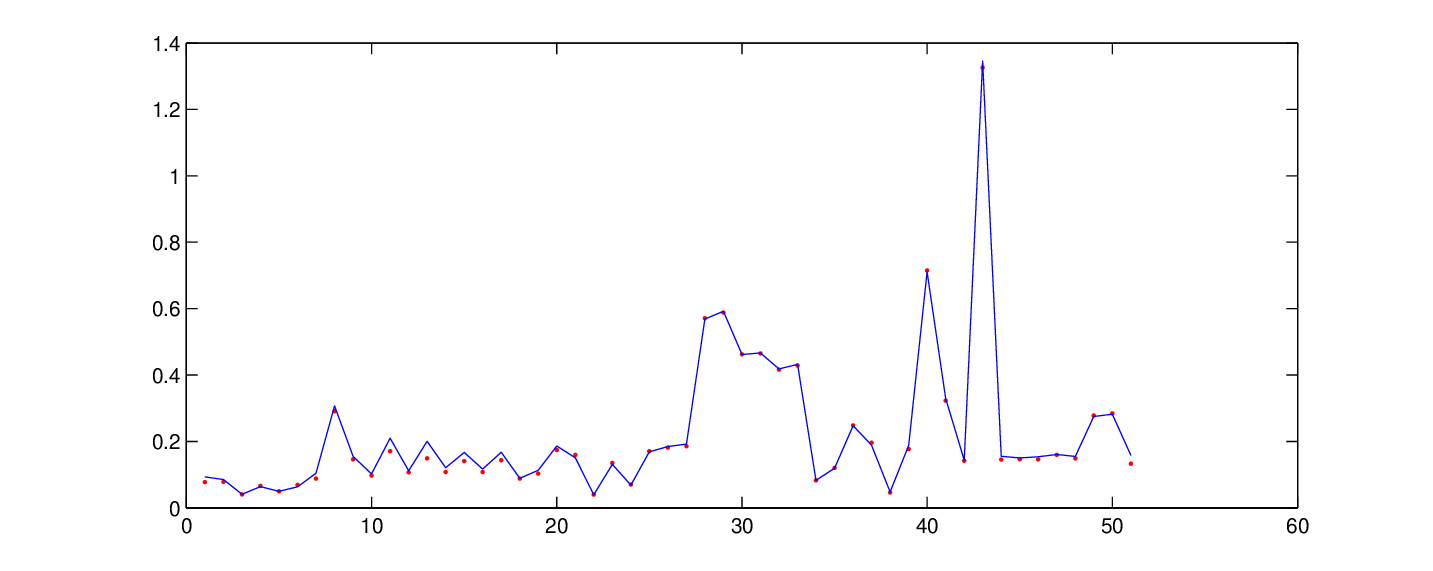}
\caption{Estimates of the standard errors from the observed (solid line) and empirical (dots) information matrix.}
\end{figure}
\end{center}
\subsection*{Acknowledgements}
I would like to thank J.K. Vermunt (Tilburg University) and M. Alfo' (Rome Universitu) for comments and suggestions.
\section*{Supplementary material}
A set of {\sc MatLab} functions that implement the algorithm described here are available from the Author on request. One of these functions will be a sample script that loads the data used in the Application and invokes the main fitting algorithm.
%
\bibliographystyle{spbasic}
\bibliography{bib}
\section*{Appendix}
The auxiliary result below may be verified by direct calculations:
\begin{equation}
\b\Delta[\diag(\b x)-\b x\b x\tr, \b b,\b x]=\diag(\b b)-\b I(\b x\tr\b b)-\b x\b b\tr.
\label{omega}
\end{equation}
The results below are derived by using the $\b\Delta$ notation and equations (\ref{Delta}), (\ref{omega}).
Let $\b v_i$ = $\b Q_i\tr \diag(\b p_i)^{-1}\b y_i$, the $(\b\beta,\b\beta\tr)$ diagonal block of $\b F_{os}-\b F_{em}$ may be computed as
$$
\sum_i\b\Delta(\b X_i\tr \b\Omega_{\b \pi_i},\b v_i,\b\beta) = \sum_i\b X_i\tr[\diag(\b v_i)-(\b\pi_i\tr\b v_i)\b I-\b\pi_i\b v_i\tr]\b\Omega_{\b \pi_i}\b X_i.
$$
Now let $\b v_i$ = $\diag(\b p_i)^{-1}\b y_i$, the diagonal block corresponding to $(\b\gamma_j,\b\gamma_j\tr)$ has the form
$$
\sum_i\b\Delta(\b\pi_{ij}\b Z_i\tr\b G\tr\b\Omega_{ij},\b v_i,\b\gamma_j) = \sum_i \b\pi_{ij}\b Z_i\tr\b G\tr[\diag(\b v_i)-(\b q_{ij}\tr\b v_i)\b I-\b q_{ij}\b v_i\tr]\b\Omega_{ij}\b G\b Z_i.
$$
Let again $\b v_i$ = $\diag(\b p_i)^{-1}\b y_i$, then the off-diagonal blocks indexed by $(\b\beta,\b\gamma_j\tr)$ may be computed as
\begin{eqnarray*}
\sum_i\b\Delta(\b X_i\tr \b\Omega_{\b \pi_i}\b Q_i\tr,\b v_i,\b\gamma_j) &=& \sum_i \b X_i\tr \b\Omega_{\b \pi_i} \b\Delta(\b Q_i\tr,\b v_i,\b q_{ij})\b\Omega_{ij}\b G\b Z_i \\ &=& \sum_i\b X_i\tr \b\Omega_{\b \pi_i} \b e_j\b v_i\tr \b\Omega_{ij}\b G\b Z_i,
\end{eqnarray*}
where $\b e_j$ is a vector of 0's except for the $j$th element which is 1. Finally the entry corresponding to all blocks indexed by $(\b\gamma_j,\b\gamma_h\tr)$ with $j\neq h$ are matrices of 0's.

\end{document}